\newcommand{\bea}{\begin{eqnarray}}
\newcommand{\eea}{\end{eqnarray}}
\def\bi{\begin{itemize}}
\def\ei{\end{itemize}}
\def\bc{\begin{center}}
\def\ec{\end{center}}
\def\C{\hbox{$\mit I$\kern-.7em$\mit C$}}
\def\R{\hbox{$\mit I$\kern-.6em$\mit R$}}
\newcommand{\one}{\mbox{$1 \hspace{-1.0mm}  {\bf l}$}}
\def\conv{\textrm{conv}}
\newtheorem{theorem}{Theorem}
\newtheorem{lemma}{Lemma}
\newcommand{\uno}{\tilde{\mbox{$1 \hspace{-1.0mm}  {\bf l}$}}}
\begin{document}

\title{Genuine multipartite entanglement of quantum states in the multiple-copy scenario}
\author{Carlos Palazuelos}
\affiliation{Departamento de An\'alisis Matem\'atico y Matem\'atica Aplicada, Universidad Complutense de Madrid, E-28040 Madrid, Spain}
\affiliation{Instituto de Ciencias Matem\'aticas, E-28049 Madrid, Spain}
\author{Julio I. de Vicente}
\affiliation{Departamento de Matem\'aticas, Universidad Carlos III de
Madrid, E-28911, Legan\'es (Madrid), Spain}

\begin{abstract}
Genuine multipartite entanglement (GME) is considered a powerful form of entanglement since it corresponds to those states that are not biseparable, i.e.\ a mixture of partially separable states across different bipartitions of the parties. In this work we study this phenomenon in the multiple-copy regime, where many perfect copies of a given state can be produced and controlled. In this scenario the above definition leads to subtle intricacies as biseparable states can be GME-activatable, i.e.\ several copies of a biseparable state can display GME. We show that the set of GME-activatable states admits a simple characterization: a state is GME-activatable if and only if it is not partially separable across one bipartition of the parties. This leads to the second question of whether there is a general upper bound in the number of copies that needs to be considered in order to observe the activation of GME, which we answer in the negative. In particular, by providing an explicit construction, we prove that for any number of parties and any number $k\in\mathbb{N}$ there exist GME-activatable multipartite states of fixed (i.e.\ independent of $k$) local dimensions such that $k$ copies of them remain biseparable.

\end{abstract}

\maketitle

\section{Introduction}

Entanglement in quantum many-body systems plays a key role in quantum information science and in the analysis of condensed matter physics. In fact, multipartite entangled states are a crucial resource in many applications of quantum technologies such as quantum sensing \cite{sensing,sensing2}, secure quantum communication \cite{qcka} or certain schemes for quantum computation like the one-way quantum computer \cite{1wayqc}. This has given rise to an extensive body of work both experimental and theoretical (see e.g.\ the reviews \cite{review1,review2}) aimed, on the one hand, at the preparation and control of entangled states shared by an increasing number of parties and, on the other hand, at the certification, classification and quantification of multipartite entanglement. This last task is plagued with several serious difficulties. For instance, the question of whether a given state is entangled or not has been proven to be computationally hard even in the bipartite scenario \cite{gurvits}, multipartite entanglement in pure states manifests itself in infinitely many inequivalent ways \cite{slocc1,slocc2,slocc3} and its manipulation under local operations and classical communication (LOCC) is severely constrained \cite{locc1,locc2,locc3}.

A natural and relevant question in this context is: which form of entanglement in $n$-partite quantum systems qualifies as truly multipartite entanglement spread over all $n$ subsystems? Entangled states defined to be those that are not fully separable cannot do the job. This is because an $n$-partite state such as $|\psi\rangle_{12}\otimes|0\cdots0\rangle_{3\cdots n}$, where $|\psi\rangle$ is a bipartite entangled state, certainly contains some entanglement and is not fully separable; yet, it is not entangled among all its constituents but only among a strict subset. Thus, states that are not fully separable but separable with respect to a bipartition of the parties are referred to as partially separable states. In order to cope with this fact, the notion of genuine multipartite entangled (GME) states was introduced \cite{GMEdef} as those that cannot be written as a mixture of partially separable states over different bipartitions (i.e.\ as those that are not biseparable). The question of whether a given $n$-partite state is GME or not is a highly non-trivial one (while deciding partial separability is already hard, the situation here gets even more complicated because there are biseparable states which are not partially separable; see e.g.\ \cite{bsneqps}); however, it has been thoroughly studied and used in experimental implementations as a benchmark \cite{review1,review2}. Indeed, most relevant states for applications--like graph states--are GME and the set of biseparable states is closed under LOCC; thus, this classification is well-defined within the standard paradigm of state manipulation under LOCC and biseparable states are then fundamentally limited for many applications. Actually, it has been shown that GME is in general necessary to achieve maximum sensitivity in quantum metrology \cite{gmesensing1,gmesensing2} and to establish a multipartite secret key \cite{GMEkey}.

Nevertheless, all these considerations only involve the single-copy regime. It is at least in principle conceivable that if an experimental implementation outputs an $n$-partite state $\rho$, repeating this procedure endorsed with a quantum memory would lead to the preparation of multiple identically prepared copies of the state, i.e.\ $\rho^{\otimes k}$. Remarkably, the set of biseparable states is not tensor stable: there exist biseparable states such that  several copies of them become GME \cite{purificationGME}. Thus, we say that a state $\rho$ is GME-activatable if $\rho^{\otimes k}$ is GME for some $k\in\mathbb{N}$ (notice that the closedness of the set of biseparable states under LOCC implies that if $\rho^{\otimes k_0}$ is GME, then $\rho^{\otimes k}$ is GME for all $k\geq k_0$). Hence, the ability to prepare and control many copies of a biseparable state can pass the GME test and offers the possibility at least in principle to obtain useful states for applications (possibly conditioned on further LOCC postprocessing \cite{purificationGME}). For instance, certain biseparable states have been shown to lead to secure multipartite secret key exploiting their GME-activability \cite{biseparableqcka}. As another example, all connected networks made out by sharing bipartite pure entangled states happen to be GME and genuine multipartite non-locality can be extracted out of them \cite{cavalcanti,networkGMNL}. From a more fundamental point of view, although examples of activation and superactivation are not infrequent in quantum information theory (they appear e.g.\ when considering which bipartite quantum states are non-local \cite{palazuelos_super-activation_2012} or when assessing which quantum channels have a non-zero capacity \cite{superactivationchannels}) and they constitute an intriguing feature of quantum theory deeply related to the intricacies of the tensor product structure, this property of biseparable states might lead to question whether GME actually certifies truly multipartite entanglement if multiple-copy manipulation is available. In fact, for similar reasons a different and more restrictive multipartite notion of entanglement has been recently proposed in \cite{GNME1}.

Given these considerations, a clear question emerges: what is the set of GME-activatable states? Reference \cite{GMEactivation} has recently considered this question for particular classes of states and conjectured that a state is not GME-activatable if and only if it is partially separable. The first main result of this paper is a proof of this conjecture, hence characterizing the set of GME-activatable states in full generality. On the one hand, this implies that if we cannot trust an experimentalist not to store copies, then the only thing certified by GME is non-partial separability, a property that can be certified by arguably simpler tests. This stresses the limitations of GME as a notion of truly multipartite entanglement already pointed out in \cite{GNME1}. On the other hand, our result can be seen as a means to produce useful multipartite entanglement by mixing partially separable states and subsequent LOCC processing on multiple copies. While producing and controlling multiple copies of a state is clearly experimentally more demanding \cite{copies}, this is within current technological reach in certain experiments for a moderate number of copies \cite{GMEactivation}. This motivates a second question: given that $\rho$ is GME-activatable, what is the smallest $k\in\mathbb{N}$ such that $\rho^{\otimes k}$ is GME? This question has also been considered in \cite{GMEactivation}, where it is shown that there exist GME-activatable states such that two copies of them are still not GME. It was also conjectured therein that there is no upper bound on the number of copies of a GME-activatable state that need to be considered in general in order to display GME. Our second main result is again a proof of this conjecture. We show that for any number of parties $n$ ($n\geq3$) and any $k\in\mathbb{N}$, there exists an $n$-partite state $\rho$ that is GME-activatable but $\rho^{\otimes k}$ is biseparable. Thus, the convergence of the hierarchy of sets of states which are not GME-activatable with $k$ copies to the set of partially entangled states only occurs in the limit $k\to\infty$. Therefore, the preparation of GME states by mixing partially separable states can require control over an arbitrarily large number of copies.

\section{Notation and definitions}

In this paper $n$ will denote the number of parties and $[n]:=\{1,2,\ldots,n\}$. Thus, any given $n$-partite quantum system will be associated with the Hilbert space $H=\bigotimes_{i=1}^nH_i$ where $H_i=\mathbb{C}^{d_i}$ for natural numbers $d_i\geq2$ for any party $i$. In particular, all Hilbert spaces will be assumed to be finite dimensional in this work. The set of density matrices will be denoted by $D(H)\subset B(H)$. A pure state $|\psi\rangle\in H$ is fully separable if there exists pure states $|\phi_i\rangle\in H_i$ $\forall i$ such that
\begin{equation}\label{fspure}
|\psi\rangle=\bigotimes_{i=1}^n|\phi_i\rangle,
\end{equation}
while it is partially separable if for some $M\subsetneq[n]$, $M\neq\emptyset$ (in order to ease the notation from now on we will always assume that subsets of $[n]$ are non-empty without explicitly stating it), there exist pure states $|\phi_M\rangle\in H_M:=\bigotimes_{i\in M}H_i$ and $|\phi_{\bar{M}}\rangle\in H_{\bar{M}}=\bigotimes_{i\notin M}H_i$ ($\bar{M}$ denotes the complement of $M$ in $[n]$) such that
\begin{equation}\label{pspure}
|\psi\rangle=|\phi_M\rangle\otimes|\phi_{\bar{M}}\rangle.
\end{equation}

These definitions are then extended to arbitrary states in $D(H)$ by taking convex hulls (which we denote by $\conv$). The set of fully separable density matrices, $FS(H)$, is given by the convex hull of all pure density matrices $|\psi\rangle\langle\psi|$ such that $|\psi\rangle$ satisfies Eq.\ (\ref{fspure}). The set of partially separable density matrices in the bipartition $M|\bar{M}$, $S_M(H)$, is given by the convex hull of all pure density matrices $|\psi\rangle\langle\psi|$ such that $|\psi\rangle$ satisfies Eq.\ (\ref{pspure}) for that given $M$. Finally, the set of partially separable density matrices is $S(H)=\cup_{M\subsetneq[n]}S_M(H)$ and the set of biseparable density matrices is $BS(H)=\conv\{S(H)\}$. For any $M\subsetneq[n]$ we have that $FS(H)\subset S_M(H)\subset S(H)\subset BS(H)$ and a state is GME if it is not biseparable. Thus, $\rho\in D(H)$ is in $S_M(H)$ if and only if (iff)
\begin{equation}\label{ps}
\rho=\sum_jq_j\sigma^{(j)}_M\otimes\tau_{\bar{M}}^{(j)},
\end{equation}
where for all $j$ $\sigma^{(j)}_M\in D(H_M)$ and $\tau^{(j)}_{\bar{M}}\in D(H_{\bar{M}})$ and $\{q_j\}$ are convex weights, i.e.\ $\sum_jq_j=1$ and $q_j\geq0$ $\forall j$. It holds as well that $\rho\in D(H)$ is in $BS(H)$ iff
\begin{equation}\label{bs}
\rho=\sum_{M\subsetneq[n]}p_M\chi_M,
\end{equation}
where $\{p_M\}$ are convex weights and $\chi_M\in S_M(H)$ $\forall M\subsetneq[n]$. Obviously, the sum in $M$ above needs only to run over those subsets that give rise to different bipartitions $M|\bar{M}$. Hence, without loss of generality, we can take that the sum contains at most $2^{n-1}-1$ terms.

Notice that in the bipartite case $H=H_1\otimes H_2$ all the above sets boil down to the same set: $FS(H)=S_1(H)=S_2(H)=S(H)=BS(H)$, i.e.\ the set of separable states, which we will denote for simplicity throughout the paper by $S(H)$. Here, non-separable states are referred to as entangled states. In this case, for any $\rho\in D(H=H_1\otimes H_2)$ we will make use of the following quantity
\begin{equation}\label{t}
T(\rho):=\frac{1}{2}\min_{\sigma\in S(H)}||\rho-\sigma||,
\end{equation}
where $||\cdot||$ stands for the trace norm (the minimum in this expression is justified by the compactness of $S(H)$). Notice that for any $\rho\in D(H)$ it holds that $0\leq T(\rho)\leq1$ and all separable states attain the lower bound. Actually, the trace-norm distance to the set of separable states is an entanglement measure in the sense that it cannot increase under deterministic LOCC transformations \cite{vedralplenio}. Furthermore, $T(\rho)$ is convex as for any $p\in[0,1]$ and any $\rho_1,\rho_2\in D(H)$ letting $\sigma_1,\sigma_2\in S(H)$ be the closest separable states to $\rho_1$ and $\rho_2$ respectively, we have that
\begin{align}
&T(p\rho_1+(1-p)\rho_2)\nonumber\\&\leq\frac{1}{2}||p\rho_1+(1-p)\rho_2-p\sigma_1-(1-p)\sigma_2||\nonumber\\
&\leq\frac{1}{2}\left(p||\rho_1-\sigma_1||+(1-p)||\rho_2-\sigma_2||\right)\nonumber\\
&=pT(\rho_1)+(1-p)T(\rho_2).
\end{align}
Going back to the multipartite case $H=\bigotimes_{i=1}^nH_i$, we will denote by $T_{M|\bar{M}}(\rho)$ the value of this quantity when we think of $\rho\in D(H)$ as a bipartite state in $D(H_M\otimes H_{\bar{M}})$, i.e. $T_{M|\bar{M}}(\rho):=\frac{1}{2}\min_{\sigma\in S_M(H)}||\rho-\sigma||$.

When we consider $k\in\mathbb{N}$ copies of an $n$-partite state, it should be understood that each copy has Hilbert space $H^{(j)}=\bigotimes_{i=1}^nH_i^{(j)}$ with $H_i^{(j)}=\mathbb{C}^{d_i}$ for $j=1,2,\ldots,k$. The multicopy state is then associated to the Hilbert space $H=\bigotimes_{i=1}^nH_i$ with $H_i=\bigotimes_{j=1}^kH_i^{(j)}$, to which all the above definitions apply. As stated in the introduction, an $n$-partite state $\rho$ is GME-activatable if there exists $k\in\mathbb{N}$ such that $\rho^{\otimes k}\notin BS(H)$.

\section{Characterization of the set of GME-activatable states}

The main result of this section is given in Theorem \ref{th1} below. We will use the following lemma in \cite{beigi} (although this is not explicitly stated there, the reader can readily check that this is the argument established in this reference in order to prove Theorem 1 therein).

\begin{lemma}(\cite{beigi}) If $\rho\in D(H_1\otimes H_2)$ is entangled, then
$$\lim_{n\to\infty}T(\rho^{\otimes n})=1.$$
\end{lemma}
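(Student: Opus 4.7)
My plan is to rewrite $T(\rho^{\otimes n})$ in its dual hypothesis-testing form and then invoke a strong distinguishability result for the set of separable states. By the variational characterization of the trace norm together with Sion's minimax theorem---justified because the set of effects $\{P:0\leq P\leq I\}$ and the set $S(H^{\otimes n})$ of bipartite separable states on the $n$-copy space are both convex and compact, and the function $(P,\sigma)\mapsto \tr(P(\rho^{\otimes n}-\sigma))$ is bilinear---one may interchange the minimum over $\sigma$ and the maximum over $P$ to obtain
\begin{equation*}
T(\rho^{\otimes n})=\max_{0\leq P\leq I}\Bigl[\tr(P\rho^{\otimes n})-\max_{\sigma\in S(H^{\otimes n})}\tr(P\sigma)\Bigr].
\end{equation*}
Thus it suffices to exhibit a sequence of tests $P_n$ on $n$ copies with $\tr(P_n\rho^{\otimes n})\to 1$ and $\max_{\sigma\in S(H^{\otimes n})}\tr(P_n\sigma)\to 0$; since always $T\leq 1$, this forces $\lim_n T(\rho^{\otimes n})=1$.

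The existence of such a sequence is precisely the content of the generalized quantum Stein's lemma for the family of separable states (Brand\~ao--Plenio): for any entangled $\rho$, the regularized relative entropy of entanglement $E_R^\infty(\rho):=\lim_n \frac{1}{n} E_R(\rho^{\otimes n})$ is strictly positive, and coincides with the optimal exponential rate of type~II error subject to vanishing type~I error in discriminating $\rho^{\otimes n}$ from $S(H^{\otimes n})$. In particular one obtains tests with both error types vanishing in $n$, and plugging them back into the dual formula yields the claim.

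The main obstacle is of course the generalized Stein's lemma itself, which is a deep theorem. A more elementary---but lossier---starting point, probably closer in spirit to the argument used in \cite{beigi}, is to fix an entanglement witness $W$ for $\rho$ with $-I\leq W\leq I$ and $\tr(W\rho)=-\epsilon<0$, and consider the averaged observable $W_n:=\frac{1}{n}\sum_{i=1}^n W_{(i)}$, where $W_{(i)}$ acts non-trivially only on the $i$-th copy. Since tracing out a subset of copies preserves separability across the bipartite cut, one gets $\tr(W_n\sigma)\geq 0$ for every $\sigma\in S(H^{\otimes n})$, while $\tr(W_n\rho^{\otimes n})=-\epsilon$ is independent of $n$. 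The remaining step is to amplify this fixed witness gap into a trace-distance gap tending to $2$: the correlations allowed among separable states on $n$ copies prevent a direct appeal to a classical law of large numbers, so one needs a concentration bound for the statistics of $W_n$ that is uniform over all correlated separable states. The cleanest route is a de Finetti--type symmetrization over the $n$ copies, which essentially reintroduces the Brand\~ao--Plenio machinery; this is where the real work of the proof lies.
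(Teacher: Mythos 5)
The paper does not supply a proof of this lemma at all: it is imported from \cite{beigi}, where the authors note it is implicit in the proof of Theorem~1, and the argument there likewise reduces the statement to asymptotic discrimination of $\rho^{\otimes n}$ against the set of separable states, resting on the Brand\~ao--Plenio generalized quantum Stein's lemma together with the strict positivity of the regularized relative entropy of entanglement for entangled states. So your main route is essentially the intended one, and it is logically sound modulo those (genuinely deep) external results. Two remarks. First, the minimax step is superfluous for the direction you actually use: the bound $T(\rho^{\otimes n})\geq \tr(P_n\rho^{\otimes n})-\max_{\sigma\in S}\tr(P_n\sigma)$ already follows from the elementary inequality $\frac{1}{2}\|\rho^{\otimes n}-\sigma\|\geq \tr\left(P_n(\rho^{\otimes n}-\sigma)\right)$ applied to each $\sigma$ separately; Sion's theorem is only needed for the converse inclusion, which plays no role here. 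Second, be aware that the original proof of the generalized quantum Stein's lemma was subsequently found to contain a gap and has since been repaired by independent arguments, so the lemma stands, but ``precisely the content of'' deserves a careful citation trail rather than a single reference.

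Your ``more elementary'' witness-based alternative is, as you yourself diagnose, not a proof: with $-I\leq W\leq I$ and $\tr(W\rho)=-\epsilon$, the averaged observable $W_n$ only yields $T(\rho^{\otimes n})\geq \epsilon/2$, a bound away from zero but not tending to one, because a separable state of the $n$-copy system may be correlated across the copies and can anti-concentrate the statistics of $W_n$ while keeping its mean nonnegative. Upgrading the expectation gap to a vanishing type-II error uniformly over all such states is exactly where Stein-type or de Finetti-type machinery becomes unavoidable, so your honest localization of where the real work lies is accurate; just be clear that the second half of your write-up is a discussion of an obstruction, not an alternative proof.
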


\begin{theorem}\label{th1}
Let $n\in\mathbb{N}$ ($n\geq2$) and $H=\bigotimes_{i=1}^nH_i$. Then, $\rho\in D(H)$ is GME-activatable iff $\rho\notin S(H)$
\end{theorem}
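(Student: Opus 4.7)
The statement is an iff, so I would prove the two directions separately. The easy direction is that if $\rho\in S(H)$, then $\rho$ is not GME-activatable: if $\rho\in S_M(H)$ for some fixed bipartition $M|\bar{M}$, write $\rho=\sum_j q_j \sigma_M^{(j)}\otimes\tau_{\bar{M}}^{(j)}$ and expand $\rho^{\otimes k}$. After rearranging tensor factors across the $k$ copies, each product term is a tensor of something on $H_M=\bigotimes_{j,i\in M}H_i^{(j)}$ and something on $H_{\bar{M}}$, so $\rho^{\otimes k}\in S_M(H)\subset BS(H)$ for every $k$.

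For the nontrivial direction, suppose $\rho\notin S(H)$, i.e.\ $\rho\notin S_M(H)$ for every bipartition $M|\bar{M}$. I would argue by contradiction, assuming that $\rho^{\otimes k}\in BS(H)$ for every $k\in\mathbb{N}$. Using \eqref{bs}, for each $k$ write
\begin{equation}
\rho^{\otimes k}=\sum_{M\subsetneq[n]}p_M^{(k)}\chi_M^{(k)},\qquad \chi_M^{(k)}\in S_M(H^{\otimes k}),
\end{equation}
where the sum runs over the (at most) $2^{n-1}-1$ inequivalent bipartitions. By the pigeonhole principle, for each $k$ there is some bipartition $M(k)$ with $p_{M(k)}^{(k)}\geq 1/(2^{n-1}-1)$. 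Since there are only finitely many bipartitions, a second pigeonhole yields a single bipartition $M^*$ such that $p_{M^*}^{(k)}\geq 1/(2^{n-1}-1)$ for infinitely many $k$.

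The key step now is a convexity bound on the bipartite distance $T_{M^*|\bar{M^*}}$. Using the convexity already established in the notation section together with $T_{M^*|\bar{M^*}}(\chi_{M^*}^{(k)})=0$ and the trivial bound $T\leq 1$ on all the other terms, we get
\begin{equation}
T_{M^*|\bar{M^*}}(\rho^{\otimes k})\leq \sum_{M\neq M^*} p_M^{(k)}=1-p_{M^*}^{(k)}\leq 1-\frac{1}{2^{n-1}-1}
\end{equation}
for all those $k$. On the other hand, regarding $\rho$ as a bipartite state on $H_{M^*}\otimes H_{\bar{M^*}}$, the hypothesis $\rho\notin S_{M^*}(H)$ means $\rho$ is bipartite entangled across that cut, so Beigi's Lemma gives $T_{M^*|\bar{M^*}}(\rho^{\otimes k})\to 1$ along the same subsequence. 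This contradicts the previous display, completing the proof.

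The only real obstacle is the combinatorial/convex-geometric step in the middle: a biseparable decomposition in the multipartite setting is a convex combination over \emph{all} bipartitions simultaneously, so one cannot directly apply the bipartite Beigi lemma to a single cut. The pigeonhole argument combined with the convexity bound on $T_{M^*|\bar{M^*}}$ is exactly what converts multipartite biseparability into a lower bound on the bipartite distance across some fixed cut, which is where the bipartite result can be brought in. Everything else (the easy direction, the convexity inequality, the extension of separability under tensor products) is routine.
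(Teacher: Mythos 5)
Your proof is correct and follows essentially the same route as the paper: both directions rest on the tensor stability of $S_M(H)$, the convexity bound $T_{M|\bar M}(\sigma)\leq 1-p_M$ for biseparable $\sigma$, the count of $2^{n-1}-1$ inequivalent bipartitions, and Beigi's lemma forcing $T_{M|\bar M}(\rho^{\otimes k})\to 1$ across every cut. The only cosmetic difference is that the paper aggregates the per-cut bounds into the single inequality $\sum_M T_{M|\bar M}(\sigma)\leq 2^{n-1}-2$ and violates it directly, whereas you reach the same contradiction by pigeonholing a single bipartition $M^*$ with weight at least $1/(2^{n-1}-1)$ along a subsequence.
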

\begin{proof}
Clearly, $S(H)$ is tensor stable; therefore, we only need to prove that if $\rho\notin S(H)$, then $\rho$ is GME-activatable. For this, we proceed in two steps. We first establish an upper bound to the sum of $T_{M|\bar{M}}$ over all bipartitions for biseparable states and then we show that it is violated by sufficiently many copies of any state that is not partially separable. If $\sigma$ is a biseparable state, then (cf.\ Eq.\ (\ref{bs}))
\begin{equation}
\sigma=\sum_{M\subsetneq[n]}p_M\chi_M,
\end{equation}
where $\{p_M\}$ are convex weights, for each $M$ $\chi_M\in S_M(H)$ and, here and in the remainder of the proof, the sum runs over a choice of all subsets that give rise to different bipartitions $M|\bar{M}$ (thus, as mentioned in the previous section, the sum above has in particular at most $2^{n-1}-1$ terms). If we consider a particular subset $M$, the convexity of $T$ implies then that
\begin{equation}
T_{M|\bar{M}}(\sigma)\leq1-p_M.
\end{equation}
Hence,
\begin{equation}
\sum_{M}T_{M|\bar{M}}(\sigma)\leq2^{n-1}-2
\end{equation}
must hold for every biseparable state $\sigma$. We now see that this inequality is violated for sufficiently many copies of a state $\rho\notin S(H)$, i.e.\ such that $\rho\notin S_M(H)$ $\forall M\subsetneq[n]$. By Lemma 1, this premise implies that for any non-empty strict subset $M$ of $[n]$ and $\forall\epsilon>0$, there exists $k_M(\epsilon)\in\mathbb{N}$ such that for all $k\geq k_M(\epsilon)$
\begin{equation}
T_{M|\bar{M}}(\rho^{\otimes k})\geq1-\epsilon.
\end{equation}
Thus, by choosing $k\in\mathbb{N}$ large enough so that the equation above holds for all $M$ and some
\begin{equation}
\epsilon<\frac{1}{2^{n-1}-1},
\end{equation}
it follows that
\begin{equation}
\sum_{M}T_{M|\bar{M}}(\rho^{\otimes k})>2^{n-1}-2,
\end{equation}
proving that $\rho$ must be GME-activatable.
\end{proof}

\section{Activation of GME requires arbitrarily many copies}

Theorem \ref{th1} naturally leads to wonder how large the constant $k$ must be for a GME-activatable state $\rho$ to display GME. In fact, a more precise way to approach this problem is to study whether there exists a function $f(n,d)$ such that for any $n$-partite GME-activatable state $\rho$ of local dimension $d$ we have that $\rho^{\otimes k}$ is GME for $k\geq f(n,d)$. It was also conjectured in \cite{GMEactivation} that such an upper bound does not exist. The main result of this section is a proof of this conjecture.
\begin{theorem}\label{th2}
For any $n, k\in\mathbb{N}$, with $n\geq3$, there exists a $n$-partite GME-activatable state $\sigma\in D(\mathbb{C}^{2^{n-1}}\otimes  \bigotimes_{i=2}^n\mathbb{C}^{2})$ such that $\sigma^{\otimes k}$ is biseparable.
\end{theorem}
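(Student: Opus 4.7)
The plan is to give an explicit construction $\sigma=\sigma_{n,k}$ parametrized by a small $\epsilon=\epsilon(n,k)>0$. Using that party~$1$'s space $\mathbb{C}^{2^{n-1}}$ factors as $\bigotimes_{i=2}^n\mathbb{C}^2$, I think of party~$1$ as holding $n-1$ virtual qubits $1_2,\ldots,1_n$, one paired with each of the other parties. For every $i\in\{2,\ldots,n\}$ define the pure state
\begin{equation*}
\eta_i=|\phi^+\rangle\langle\phi^+|_{1_i,i}\otimes\bigotimes_{j\in\{2,\ldots,n\}\setminus\{i\}}|0\rangle\langle 0|_{1_j,j}.
\end{equation*}
Each $\eta_i$ lies in $S_{\{1,i\}}(H)$ and is entangled precisely across those bipartitions that place $1$ and $i$ on opposite sides. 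The candidate state is
\begin{equation*}
\sigma=(1-\epsilon)\tau+\frac{\epsilon}{n-1}\sum_{i=2}^n\eta_i,
\end{equation*}
where $\tau$ is a carefully chosen biseparable reference; a first attempt is $\tau=|0\rangle\langle 0|^{\otimes 2(n-1)}$, but a small noisy perturbation may need to be added.

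GME-activatability follows from Theorem~\ref{th1} once I verify $\sigma\notin S_M(H)$ for every $M\subsetneq[n]$. Given such $M$, pick $i\in\{2,\ldots,n\}$ on the side opposite party~$1$ and test the partial transpose $\sigma^{T_{\bar M}}$ on the vector $|v\rangle=\tfrac{1}{\sqrt 2}(|01\rangle-|10\rangle)_{1_i,i}\otimes|0\rangle_{\mathrm{rest}}$. A direct calculation shows that $\tau$ and every $\eta_j$ with $j\neq i$ annihilate $|v\rangle$ (before and after partial transpose), while $\langle v|\eta_i^{T_{\bar M}}|v\rangle=-1/2$; hence $\sigma^{T_{\bar M}}$ has a strictly negative eigenvalue for every $\epsilon>0$, so $\sigma$ is NPT, and thus entangled, across every bipartition.

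The heart of the proof is the biseparability of $\sigma^{\otimes k}$. Expanding this tensor power yields a convex combination indexed by sequences $(s_1,\ldots,s_k)\in\{0,2,\ldots,n\}^k$, with $s_j=0$ contributing $\tau$ and $s_j=i$ contributing $\eta_i$. Setting $T=\{s_j:s_j\neq 0\}$, every term with $T\subsetneq\{2,\ldots,n\}$ lies in $S_{\{1\}\cup T}(H^{\otimes k})$: placing party~$1$ together with the parties in $T$ on one side of the bipartition keeps each active entangled pair $(1,i)$, $i\in T$, intra-side, and at least one party in $\{2,\ldots,n\}\setminus T$ remains on the other side so the bipartition is non-trivial. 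The obstruction comes from the ``bad'' sequences with $T=\{2,\ldots,n\}$: they are tensors of pure GME states, have combined weight $O(\epsilon^{n-1})$, and vanish identically when $k<n-1$. The main technical obstacle is to absorb these bad summands into a biseparable convex decomposition of $\sigma^{\otimes k}$; the plan is to add to $\tau$ a carefully engineered small noise component so that the good portion of $\sigma^{\otimes k}$ carries a biseparable trace-norm ball of radius exceeding the bad weight, via quantitative biseparability-robustness estimates in the spirit of Gurvits--Barnum bounds for the fully separable ball around the maximally mixed state. The delicate balance between escaping every $S_M(H)$ (for activatability) and staying inside $BS(H^{\otimes k})$ (for biseparability of the $k$-copy state) forces $\epsilon$ to shrink with $k$, and reconciling these opposing constraints is where the real work of the proof lies.
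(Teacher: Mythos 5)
Your construction differs genuinely from the paper's (which tensors $n-1$ isotropic states of visibility $p$ slightly above $1/3$ along the edges of a star, rather than mixing single-pair-entangled pure states), and its first two stages are sound: the singlet witness does certify $\sigma\notin S_M(H)$ for every $M$, and the decomposition of $\sigma^{\otimes k}$ into ``good'' terms lying in some $S_{\{1\}\cup T}(H^{\otimes k})$ plus ``bad'' terms of total weight $\Theta(\epsilon^{n-1})$ is correct, as is the observation that the bad terms are absent when $k<n-1$. The problem is that the central step---absorbing the bad terms when $k\geq n-1$---is only sketched, and the mechanism you propose provably cannot deliver it. To give the good part a biseparable trace-norm ball you must make it full rank, so you add noise of weight $\delta$ to $\tau$; but any separable (hence PPT) noise $N$ contributes $+(1-\epsilon)\delta\langle v|N^{T_{\bar M}}|v\rangle\geq 0$ to the witness value, and for, say, white noise this forces $\delta/4^{n-1}<\epsilon/(2(n-1))$ if $\sigma$ is to stay NPT across every cut. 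On the other hand, the radius of any trace-norm ball around the normalized good part contained in $BS(H^{\otimes k})\subset D(H^{\otimes k})$ is at most (twice) its smallest eigenvalue, which is at most $\lambda_{\min}(\sigma)^k\leq(\delta/4^{n-1})^k<\big(\epsilon/(2(n-1))\big)^k$, while the bad weight is at least $\big(\epsilon/(n-1)\big)^{n-1}(1-\epsilon)^{k}$ already from the sequences containing each $\eta_i$ exactly once. The required inequality $\big(\epsilon/(n-1)\big)^{n-1}(1-\epsilon)^k\lesssim\big(\epsilon/(2(n-1))\big)^k$ fails for all $\epsilon\in(0,1/2)$ once $k\geq n$, and gets worse as $\epsilon\to 0$; the borderline case $k=n-1$ is then killed by the dimension-dependent (Gurvits--Barnum-type) factor in the actual separability radius around $\one/4^{(n-1)k}$, which is exponentially small in $4^{(n-1)k}$. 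So the ``ball of radius exceeding the bad weight'' is unattainable precisely in the only nontrivial regime, and shrinking $\epsilon$ does not reconcile the two constraints---it is not merely delicate, it is impossible along this route.

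The reason the paper's proof succeeds where the ball argument fails is that its absorption is directional rather than metric, and it happens on a single edge rather than on the full $k$-copy state. Writing $\rho(p)^{\otimes k}=f_k(p)\Phi(p,k)+(1-f_k(p))(\one/4)^{\otimes k}$, the unique dangerous term $[f_k(p)]^{n-1}\bigotimes_i\Phi_{1i}$ is split evenly among the $n-1$ terms carrying exactly one identity leg, reducing everything to the separability of the \emph{bipartite} state proportional to $\frac{f_k(p)}{n-1}\Phi(p,k)+(1-f_k(p))(\one/4)^{\otimes k}$. That state lies on the segment joining the interior point $\one/4^{k}\otimes$\,(nothing else) --- more precisely $(\one/4)^{\otimes k}$ --- to the separable state $\rho(1/3)^{\otimes k}$, strictly closer to the interior endpoint (Lemmas 2 and 3 of the paper), so only segment-containment in a convex set is needed, never a quantitative ball; and the weight of the absorbing partners, $[f_k(p)]^{n-2}(1-f_k(p))$, stays comparable to the bad weight $[f_k(p)]^{n-1}$ as $p\to 1/3^+$. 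If you want to salvage your construction you would need an analogous \emph{directional} pairing of each bad sequence with specific good sequences (e.g., those differing in one slot), not a robustness ball; as written, the proof has a gap at exactly the step that constitutes the theorem.
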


In order to prove this result, let us recall the definition of isotropic states, which are given by
\begin{equation}\label{eqisotropic1}
\rho(p)=p\phi^++(1-p)\tilde{\one},
\end{equation}
where $|\phi^+\rangle=(|00\rangle+|11\rangle)/\sqrt{2}$ is the $2$-dimensional maximally entangled state, $\phi^+=|\phi^+\rangle\langle\phi^+|$ and $\tilde{\one}=\one/4$ (we will also use the notation $\tilde{\one}$ for the normalized identity in any finite dimension). It is well-known that $\rho(p)$ is entangled iff $p>1/3$ \cite{HoHo99}.

We will also need a couple of lemmas about bipartite states. Given a Hilbert space $H=H_1\otimes H_2$, recall that $S(H)$ denotes the set of separable states.
\begin{lemma}\label{lemma 1}
Let $\Phi\in D(H)$ and $\alpha\in (0,1)$ be such that $\rho_1=\alpha\Phi+(1-\alpha)\uno\in S(H)$. Then, for any $\alpha'\in [0,\alpha)$ the state $\rho_2=\alpha'\Phi+(1-\alpha')\uno$ is in the interior of $S(H)$.
\end{lemma}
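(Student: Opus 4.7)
The plan is to reduce the lemma to two standard facts: (i) the maximally mixed state $\uno$ lies in the interior of $S(H)$ relative to $D(H)$, i.e.\ there exists $r>0$ such that every $\omega \in D(H)$ with $||\omega - \uno|| < r$ is separable (this is the classical Gurvits--Barnum separable ball around the maximally mixed state); and (ii) the elementary convex-analysis observation that if $C$ is convex, $x \in C$, $y \in \mathrm{int}(C)$, and $\lambda \in [0,1)$, then $\lambda x + (1-\lambda) y \in \mathrm{int}(C)$.

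The key algebraic step is to exhibit $\rho_2$ as such a convex combination of $\rho_1$ and $\uno$. Setting $\lambda := \alpha'/\alpha$ gives $\lambda \in [0,1)$ (since $\alpha > 0$ and $\alpha' \in [0,\alpha)$), and a direct expansion yields
$$\lambda\,\rho_1 + (1-\lambda)\uno = \lambda\alpha\,\Phi + (1-\lambda\alpha)\uno = \alpha'\Phi + (1-\alpha')\uno = \rho_2.$$
Thus $\rho_2$ is a convex combination of $\rho_1 \in S(H)$ and $\uno \in \mathrm{int}(S(H))$ with strictly positive weight $1-\lambda>0$ on $\uno$, and (i)--(ii) immediately yield $\rho_2 \in \mathrm{int}(S(H))$.

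If one prefers to inline (ii), the neighborhood can be produced explicitly: fix $r$ from (i), pick any $\omega \in D(H)$ with $||\omega - \rho_2|| < (1-\lambda)r$, and set $\omega' := \uno + (\omega - \rho_2)/(1-\lambda)$. Then $\omega'$ is Hermitian, has unit trace, satisfies $||\omega' - \uno|| < r$, and is positive (being a small Hermitian perturbation of the full-rank state $\uno$), so $\omega' \in S(H)$ by (i); writing $\omega = \lambda\rho_1 + (1-\lambda)\omega'$ and invoking convexity of $S(H)$ gives $\omega \in S(H)$. The only non-routine input is fact (i), which is classical, so I do not anticipate a genuine obstacle: the lemma reduces to a short convex-geometric observation combining the separable ball around $\uno$ with the hypothesis that $\rho_1$ is already separable.
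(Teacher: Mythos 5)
Your argument is correct and is essentially the paper's own proof: both write $\rho_2=\lambda\rho_1+(1-\lambda)\uno$ with $\lambda=\alpha'/\alpha\in[0,1)$ and combine the Gurvits--Barnum interior point $\uno$ with the standard convex-geometry fact that the half-open segment from an interior point to any point of a convex set stays in the interior (the paper cites Schneider, Lemma 1.1.9, where you prove it inline). No substantive difference.
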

\begin{proof}
We know that the state $\uno$ is in the interior of $S(H)$ \cite{GB02}. Now, since $\rho_1\in S(H)$, \cite[Lemma 1.1.9]{Schneider} assures that $$[\uno, \rho_1):=\{\lambda \uno+ (1-\lambda)\rho_1: \lambda \in [0,1)\}$$ is contained in the interior of $S(H)$. Now, it is clear that $\rho_2\in [\uno, \rho_1)$.
\end{proof}

\begin{lemma}\label{lemma 2}
Let $\Phi\in D(H)$, $\alpha\in (0,1)$, $n\in\mathbb{N}$ and $\tilde{\rho}_1$ be the normalized state of $\rho_1=\frac{\alpha}{n}\Phi+(1-\alpha)\uno$. Then, for any $\alpha'\in [0,\alpha]$ the normalized state $\tilde{\rho_2}$ of $\rho_2=\frac{\alpha'}{n}\Phi+(1-\alpha')\uno$ verifies that $\tilde{\rho}_2\in [\uno, \tilde{\rho}_1]$. In particular, $\tilde{\rho}_1\in S(H)$ implies $\tilde{\rho}_2\in S(H)$.
\end{lemma}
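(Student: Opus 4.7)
The plan is to reduce the claim to a direct one-parameter computation. Both $\uno$ and $\Phi$ are normalized, so the trace of $\rho_i=\frac{\alpha_i}{n}\Phi+(1-\alpha_i)\uno$ (with $\alpha_1=\alpha$, $\alpha_2=\alpha'$) is simply $c_i:=\tr(\rho_i)=1-\alpha_i(n-1)/n$, and in particular $c_i>0$ for $\alpha_i\in[0,1]$. Dividing through, the normalized states take the explicit convex-combination form
\begin{equation}
\tilde{\rho}_i \;=\; a_i\Phi+b_i\uno,\qquad a_i=\frac{\alpha_i}{n\,c_i},\quad b_i=\frac{1-\alpha_i}{c_i},
\end{equation}
with $a_i+b_i=1$ and $a_i,b_i\ge0$.

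Next I would look for the unique $\lambda\in\mathbb{R}$ such that $\tilde{\rho}_2=\lambda\uno+(1-\lambda)\tilde{\rho}_1$. Matching the coefficient of $\Phi$ forces $(1-\lambda)a_1=a_2$, i.e.
\begin{equation}
1-\lambda \;=\; \frac{a_2}{a_1}\;=\;\frac{\alpha'\,c_1}{\alpha\,c_2},
\end{equation}
which is well-defined because $\alpha>0$. With this choice of $\lambda$, the coefficient of $\uno$ also matches automatically since the two coefficients sum to $1$ on each side.

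The only substantive step is then to check that $\lambda\in[0,1]$. Nonnegativity of $1-\lambda$ is immediate. For $1-\lambda\le1$ one clears denominators: $\alpha'c_1\le\alpha c_2$ is equivalent, after substituting $c_i=1-\alpha_i(n-1)/n$ and cancelling the common cross-term $\alpha\alpha'(n-1)/n$, to $\alpha'\le\alpha$, which holds by hypothesis. Hence $\tilde{\rho}_2\in[\uno,\tilde{\rho}_1]$.

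For the last sentence, if $\tilde{\rho}_1\in S(H)$ then $\tilde{\rho}_2$ is a convex combination of $\uno$ and $\tilde{\rho}_1$, both separable (separability of $\uno$ is standard, cf.\ \cite{GB02}), so $\tilde{\rho}_2\in S(H)$. I do not anticipate any real obstacle: the whole argument is a bookkeeping exercise in the normalization, and the inequality $\alpha'\le\alpha$ is precisely what is needed to keep $\lambda$ in $[0,1]$.
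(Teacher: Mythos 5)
Your proposal is correct and follows essentially the same route as the paper: the paper's (very terse) proof rests on exactly the inequality you derive, namely that the normalized $\Phi$-coefficient $\frac{\alpha_i/n}{\alpha_i/n+(1-\alpha_i)}$ is monotone in $\alpha_i$, which it states and declares the rest trivial. You have merely filled in the bookkeeping (the explicit $\lambda$ and the equivalence $\alpha' c_1\le\alpha c_2\Leftrightarrow\alpha'\le\alpha$) that the paper omits.
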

\begin{proof}
It is trivial from the fact that $0\leq \alpha'\leq \alpha$ implies $$\frac{\frac{\alpha'}{n}}{\frac{\alpha'}{n}+(1-\alpha')}\leq \frac{\frac{\alpha}{n}}{\frac{\alpha}{n}+(1-\alpha)}.$$

The last assertion follows from the fact that $S(H)$ is a convex set.
\end{proof}

We are now in the position to prove Theorem 2.

\begin{proof}
The proof proceeds by explicitly constructing the required state for any given value of $k$ and $n$. The states will be pair-entangled network (PEN) states \cite{CPV21} arising in a star network where the parties share isotropic states as in (\ref{eqisotropic1}). In what follows, subindices on bipartite states will denote which parties share it, e.g.\ $\rho_{ij}(p)$ represents an isotropic state with visibility $p$ (cf.\ Eq.\ (\ref{eqisotropic1})) shared by parties $i$ and $j$ in $[n]$. The PEN states that we use in our construction are then
\begin{equation}\label{penstar1}
\sigma_n(p)=\bigotimes_{i=2}^n\rho_{1i}(p).
\end{equation}
Here, party $1$ is treated as if composed of $n-1$ qubits and each term in the tensor product addresses a different one of these qubits. Notice also that $\sigma_n(p)$ is obviously not partially separable $\forall n\geq3$ and $\forall p>1/3$, and therefore, by Theorem 1, GME-activatable (this had already been observed in \cite{CPV21} using an entanglement distillation argument). Thus, it remains to provide a value $p=p(k,n)>1/3$ for which $\sigma_n(p)^{\otimes k}$ is biseparable.

Let us write, for a general $p\in [0,1]$,
\begin{equation}
\rho(p)^{\otimes k}=f_k(p)\Phi(p,k)+(1-f_k(p))\uno^{\otimes k},
\end{equation}
where $f_k(p)=1-(1-p)^k$ and $\Phi(p,k)$ is a (possibly entangled) state.  With this notation at hand, we can rewrite $k$ copies of the state in (\ref{penstar1}), $\sigma_n(p)^{\otimes k}=\bigotimes_{i=2}^n\rho_{1i}(p)^{\otimes k}$, as
\begin{align}\label{split1}
&[f_k(p)]^{n-1}\bigotimes_{i=2}^n\Phi_{1i}(p,k)\\&\nonumber+[f_k(p)]^{n-2}(1-f_k(p))\sum_{i=2}^n\left(\bigotimes_{j\neq i}\Phi_{1j}(p,k)\right)\otimes\uno^{\otimes k}_{1i}\\&\nonumber +\cdots
\end{align}
where the omitted terms are all manifestly biseparable. Now, Eq. (\ref{split1}) can be rewritten as
\begin{widetext}
\begin{equation}\label{split2}
[f_k(p)]^{n-2}\sum_{i=2}^n\bigotimes_{j\neq i}\Phi_{1j}(p,k)\otimes \left(\frac{f_k(p)}{n-1}\Phi_{1i}(p,k)+(1-f_k(p))\uno^{\otimes k}_{1i}\right)+\cdots
\end{equation}
\end{widetext}

Hence, the proof is concluded if we find a $p=p(n,k)>1/3$ such that the state (after normalization)
$$\frac{f_k(p)}{n-1}\Phi(p,k)+(1-f_k(p))\uno^{\otimes k}$$ is separable.

Now, we know that
\begin{equation}\label{SEP}
\rho(1/3)^{\otimes k}=f_k(1/3)\Phi(1/3,k)+(1-f_k(1/3))\uno^{\otimes k}
\end{equation}is separable, as this state is the tensor product of separable states.

In addition, since $\lim_{p\rightarrow 1/3^+}\rho(p)^{\otimes k}=\rho(1/3)^{\otimes k}$, $\lim_{p\rightarrow 1/3^+}(1-f_k(p))=(1-f_k(1/3))$ and
\begin{align}\label{eq I}
\lim_{p\rightarrow 1/3^+}f_k(p)=f_k(1/3),
\end{align}it is clear that
\begin{align}\label{eq II}
\lim_{p\rightarrow 1/3^+}\Phi(p,k)=\Phi(1/3,k).
\end{align}

Now, Eq. (\ref{eq I}) and the fact that $f_k(1/3)\in (0,1)$ imply that there exists a $p_0>1/3$ such that for every $p\in (1/3, p_0]$ we have $$\frac{f_k(p)}{(n-1)-(n-2)f_k(p)}<f_k(1/3).$$In particular, $\frac{f_k(p_0)}{(n-1)-(n-2)f_k(p_0)}<f_k(1/3)$.

It is now easy to see that the state 
\begin{align}
&\frac{\frac{f_k(p_0)}{n-1}}{\frac{f_k(p_0)}{n-1}+(1-f_k(p_0))}\Phi(1/3,k)\nonumber\\&+\frac{1-f_k(p_0)}{\frac{f_k(p_0)}{n-1}+(1-f_k(p_0))}\uno^{\otimes k}\nonumber
\end{align}
is in the interior of the set of separable states. Indeed, this follows from Lemma \ref{lemma 1}, Eq. (\ref{SEP}) and the fact that
\begin{align}
&\frac{\frac{f_k(p_0)}{n-1}}{\frac{f_k(p_0)}{n-1}+(1-f_k(p_0))}\nonumber\\&=\frac{f_k(p_0)}{(n-1)-(n-2)f_k(p_0)}<f_k(1/3).\nonumber
\end{align}

Hence, using Eq. (\ref{eq II}) we can find a $\hat{p}\in (1/3, p_0]$ such that the state $$\frac{\frac{f_k(p_0)}{n-1}}{\frac{f_k(p_0)}{n-1}+(1-f_k(p_0))}\Phi(\hat{p},k)+\frac{1-f_k(p_0)}{\frac{f_k(p_0)}{n-1}+(1-f_k(p_0))}\uno^{\otimes k}$$is separable. Now, since $\hat{p}\in (1/3, p_0]$, Lemma \ref{lemma 2} guarantees that the state
$$\frac{\frac{f_k(\hat{p})}{n-1}}{\frac{f_k(\hat{p})}{n-1}+(1-f_k(\hat{p}))}\Phi(\hat{p},k)+\frac{1-f_k(\hat{p})}{\frac{f_k(\hat{p})}{n-1}+(1-f_k(\hat{p}))}\uno^{\otimes k}$$is also separable.

This concludes the proof.
\end{proof}

\section{Conclusions}

In this work we have studied GME in the multiple-copy scenario. We have provided a full characterization of the set of GME-activatable states as those that are not partially separable (Theorem 1). Furthermore, we have shown in Theorem 2 that this equivalence requires the asymptotic limit of infinitely many copies: given any $k\in\mathbb{N}$, we have explicitly constructed GME-activatable $n$-partite states of a fixed local dimension for all $n\geq3$ such that $k$ copies of them remain biseparable. These two results were actually conjectured in the recent work \cite{GMEactivation}. Nevertheless, while Theorem 1 completely settles conjecture (ii) in \cite{GMEactivation}, Theorem 2 gives only a partial answer to conjecture (i) in that paper, which stated that for every $k\geq 2$ there exists a GME-activatable state $\rho$ such that $\rho^{\otimes(k-1)}$ is biseparable but $\rho^{\otimes k}$ is GME. Theorem 2 proves the essential part of the conjecture (that is, the activation of GME in general requires an unbounded number of copies) and, actually, this implies that the aforementioned property conjectured in \cite{GMEactivation} has to hold for infinitely many different values of $k$. However, this does not rule out that for some specific values of $k$ the set of GME-activatable states with $k-1$ copies could be equal to the set of GME-activatable states with $k$ copies. Thus, this question remains open.

The construction for Theorem 2 uses PEN states, which we have recently introduced in \cite{CPV21}. An $n$-partite PEN state $\sigma_{G,\Upsilon}$ is defined by an undirected graph $G=(V,E)$ with vertices $V=[n]$ and edges $E\subseteq\{(i,j):i,j\in V, i<j\}$ and, given $E$, a set of bipartite states $\Upsilon=\{\rho(i,j)\}_{(i,j)\in E}$ so that
\begin{equation}\label{pen}
\sigma_{G,\Upsilon}=\bigotimes_{(i,j)\in E}\rho_{ij}(i,j).
\end{equation}
Therefore, PEN states are those that can be generated by only distributing bipartite entanglement among different pairs of parties. Thus, they constitute a realistic class of multipartite states that are relatively simple to prepare, underlying the current investigations on quantum networks as platforms for quantum information processing \cite{reviewn}. Moreover, their convenient mathematical structure makes it possible to exploit the much better developed theory of bipartite entanglement in order to analyze the complex structure of entanglement in the multipartite scenario, as the proof of Theorem 2 further exemplifies. It was shown in \cite{CPV21} that the mere distribution of bipartite entanglement in networks does not guarantee GME, i.e.\ there exist PEN states in which $G$ is connected and all states in $\Upsilon$ are entangled but such that $\sigma_{G,\Upsilon}$ is biseparable. It is worth mentioning in this context that our results here imply that the above property breaks down if the nodes share sufficiently many copies of the states in $\Upsilon$. Since, under the above assumptions on $G$ and $\Upsilon$, $\sigma_{G,\Upsilon}$ is not partially separable, Theorem 1 implies that the PEN state is GME-activatable and, hence, there always exists a value of $k\in\mathbb{N}$ such that $\sigma_{G,\Upsilon}^{\otimes k}=\bigotimes_{(i,j)\in E}[\rho_{ij}(i,j)]^{\otimes k}$ is GME. In \cite{CPV21} we had already made the weaker observation that all PEN states such that $G$ is connected and the states in $\Upsilon$ are distillable must be GME-activatable. We mention in passing that, reversing this implication, if there existed a PEN state that was not GME-activatable, this would mean that at least one state in $\Upsilon$ is not distillable, which could have been used to tackle the long-standing open question of the existence of bound entanglement beyond states which are positive under partial transposition. However, Theorem 1 closes this path.

From a more general perspective, in the line of \cite{GNME1} that introduces the notion of genuine network multipartite entanglement (GNME), Theorem 1 exposes further the fundamental limitations of GME as an experimental test of truly multipartite entanglement when multiple copies can be controlled. However, contrary to the set of biseparable states, the set of non-GNME is not closed under LOCC manipulation and relevant resources in this paradigm do not pass the GNME test. This is the case, for instance, of PEN states that underpin protocols held in quantum networks \cite{reviewn} (such as the star network in which a powerful central laboratory prepares entangled states for satellite nodes \cite{Var19}). Indeed, if all parties share sufficient bipartite entanglement as to enable perfect teleportation any state of any given local dimension can be obtained from them and they are, therefore, universal resources in the standard paradigm of state manipulation under LOCC. Hence, from a more practical point of view, GME can be seen as a benchmark to produce multipartite quantum states useful for applications (possibly conditioned on further LOCC postprocessing) in the light of \cite{gmesensing1, gmesensing2,GMEkey}. In fact, from this perspective our result uncovers the possibility in general to produce useful multipartite entanglement by mixing partially separable states, as the work of \cite{biseparableqcka} exemplifies in a particular case in the context of quantum cryptography. Looking at GME activation as a resource to be exploited led us to consider the question of whether there is a general upper bound for $k$ such that $\rho^{\otimes k}$ is GME given that $\rho$ is GME-activatable. This is answered in the negative in Theorem 2, implying that the preparation of GME states by mixing partially separable states can require control over an arbitrarily large number of copies. In this context, it would be then interesting to find general means to assess what the minimal number of copies necessary to display GME is for a given non-partially-separable state. We leave this problem for future research.

\begin{acknowledgments}
We thank Ludovico Lami for useful comments. This research was funded by the Spanish Ministerio de Ciencia e Innovaci\'{o}n (grant PID2020-113523GB-I00) and Comunidad de Madrid (grant QUITEMAD-CMS2018/TCS-4342). C.P. is partially supported as well by the Spanish Ministerio de Ciencia e Innovaci\'{o}n  (grant CEX2019-000904-S funded by MCINN/AEI/10.13039/501100011033). JIdV also acknowledges financial support from Comunidad de Madrid (Multiannual Agreement with UC3M in the line of Excellence of University Professors EPUC3M23 in the context of the V PRICIT).

\end{acknowledgments}

\end{document}